\newtheorem{Reduction}{Reduction}
 \providecommand\@dotsep{5}
 \def\listtodoname{}
 \def\listoftodos{\@starttoc{tdo}\listtodoname}
\newcounter{nmcomment}
\tikzstyle{vertex}=[circle, draw, inner sep=0pt, minimum size=4.5pt]
\newcommand{\NP}{\ensuremath{\sf{NP}}\xspace}
\newcommand{\fpt}{\ensuremath{\sf{FPT}}\xspace}
\newcommand{\cfcn}{\textsc {CF-CN}}
\newcommand{\cfon}{\textsc {CF-ON}}
\newcommand{\problembox}[4]{
\begin{framed}
{\sc #1} \\
\begin{tabular}{p{.18\textwidth} p{.8\textwidth}}
\hfill \bf Input: & {#2}\\
\hfill \bf Parameter: & {#3}\\
\hfill \bf Question: & {#4}\\
\end{tabular}
\end{framed}
}
\begin{document}

\title{Parameterized Algorithms for Conflict-free Colorings of Graphs}
\titlerunning{The Parameterized Complexity of Conflict-free Colorings}  
%
\author{I. Vinod Reddy}
\authorrunning{I. V Reddy} 
%

%

\institute{IIT Gandhinagar, India \\
 \email{reddy\_vinod@iitgn.ac.in}
 }

\maketitle              

\begin{abstract}
In this paper, we study the conflict-free coloring of graphs induced by neighborhoods.
A coloring of a graph is conflict-free if every vertex has a uniquely colored vertex in its neighborhood. The conflict-free coloring problem is to color the vertices of a graph using the minimum number of colors such that the coloring is conflict-free. 
We consider both closed neighborhoods, where the neighborhood of a vertex includes 
itself, and open neighborhoods, where a vertex does not included in its
neighborhood. We study the parameterized complexity of conflict-free closed neighborhood coloring and conflict-free open neighborhood coloring problems. We show that both problems are fixed-parameter tractable ($\fpt$) when parameterized by the cluster vertex deletion number of the input graph. This generalizes the result of Gargano et al.(2015) that conflict-free coloring is fixed-parameter tractable parameterized by the vertex cover number. Also, we show that both problems admit an additive constant approximation algorithm when parameterized by the distance to threshold graphs.

We also study the complexity of the problem on special graph classes. We show that both problems can be solved in polynomial time on cographs. For split graphs, we give a polynomial time algorithm for closed neighborhood conflict-free coloring problem, whereas we show that open neighborhood conflict-free coloring is $\NP$-complete.
We show that interval graphs can be conflict-free colored using at most four colors.  
\end{abstract}

\section{Introduction}
A hypergraph is a pair $ \mathcal{H}=(V, E)$ where $V$ is the set of vertices and $E$ is a set of non-empty subsets of $V$ called \emph{hyperedges}.  
A proper $k$-coloring of $\mathcal{H}$ is an assignment of colors from $\{1,2, \cdots, k\}$ to every vertex of $\mathcal{H}$ such that every hyperedge contains at least two vertices of distinct colors.  
The minimum number of colors required to color the vertices of
$\mathcal{H}$ is called the chromatic number of $\mathcal{H}$ and is denoted as $\chi( \mathcal{H})$.
The \emph{conflict-free} coloring is a special case of coloring and it is defined as follows.
\begin{definition}
 Let $ \mathcal{H}=(V, E)$ be a hypergraph, a coloring $C_\mathcal{H}$ is called conflict-free coloring of $\mathcal{H}$ if for every $e \in E$
there exists a vertex $u \in e$ such that for all $v\in e$, $u \neq v$ we have $C_\mathcal{H}(u) \neq C_\mathcal{H}(v )$.  
The minimum number of colors needed to conflict-free color the vertices of a
hypergraph $\mathcal{H}$ is called the \emph{conflict-free chromatic number} of $\mathcal{H}$.
\end{definition}
The conflict-free coloring problem was introduced by Even et al. \cite{even2003conflict}   
to study the frequency assignment problem for cellular networks.
These networks contain two types of nodes, base stations and clients. Fixed frequencies are assigned to base stations to allow connections to clients. Each client scans for the available base stations in this neighborhood and connects to one of the available base station. 
Suppose if two base stations are available to a client, which are assigned the same frequency then mutual interference occurs and the connection between the client and base stations can become noisy. Our aim is to reduce the disturbances occur in connections between base stations and clients. The frequency assignment problem on cellular networks is an assignment of frequencies to base stations such that for 
each client there exists a base station of unique frequency within his region. The
goal here is to minimize the number of assigned frequencies, since available frequencies are limited and expensive. 

We can model this problem using the hypergraphs. The vertices of hypergraph correspond to the base stations and the set of base stations available for each client is represented by a hyperedge. The problem reduces to assigning frequencies to vertices of hypergraph such that each hyperedge contains a vertex of unique frequency.   
Conflict-free coloring is well studied for hypergraphs induced by geometric objects like, intervals \cite{bar2008deterministic}, rectangles \cite{ajwani2007conflict}, unit disks \cite{lev2009conflict} etc. 
This problem also has applications in areas like radio frequency identification and robotics, VLSI design and many other fields.

In this paper, we study the 
conflict-free coloring of hypergraphs induced by graph neighborhoods.
Let $G=(V,E)$ be a graph, for a vertex $v \in V(G)$, $N(v)$ denotes the set consisting of all vertices which are adjacent to $v$, called open neighborhood of $v$. The set $N[v]=N(v) \cup \{v\}$ is called the closed neighborhood of $v$.
The conflict-free open neighborhood (\cfon{}) coloring of a graph $G$ is defined as the conflict-free coloring of the hypergraph $\mathcal{H}$ with 
$$ V(\mathcal{H}) =V(G) \qquad \textrm{and} \qquad E(\mathcal{H})=\{N(v)~:~ v \in V(G)\}$$
Similarly conflict-free closed neighborhood (\cfcn{}) coloring  problem can be defined. 

Alternatively we can also define both \cfcn{} coloring and \cfon{} coloring problems as follows.
Given a graph $G$ and a coloring $C_G$, we say that a subset $U \subseteq V(G)$ has a \emph {unique color} with respect to $C_G$ if there exists a color
$c$ such that $|\{u \in U ~|~ C_G(u)=c\}|=1$.

\begin{definition} 
\begin{enumerate}
 \item  A coloring $C_G$ of a graph $G$ is called \emph{conflict-free closed neighborhood (\cfcn{}) coloring}  if for every vertex $v \in V(G)$, the set $N[v]$ has a unique color.  
 \item  A coloring $C_G$ of a graph $G$ is called \emph{conflict-free open neighborhood (\cfon{}) coloring}  if for every vertex $v\in V(G)$, the set $N(v)$ has a unique color.  
\end{enumerate}
 \end{definition}
The minimum value $k$ for which there is a \cfon{} (resp. \cfcn{}) coloring of $G$
with $k$ colors is called the \emph{\cfon{} (resp. \cfcn{}) chromatic number} of $G$ and is denoted as $\chi_{cf}(G)$ (resp.  $\chi_{cf}[G]$).

\vspace{-0.1cm}
\begin{figure}
\centering
 \includegraphics[scale=0.6,trim={3cm 15.5cm 2cm 3.2cm},clip]{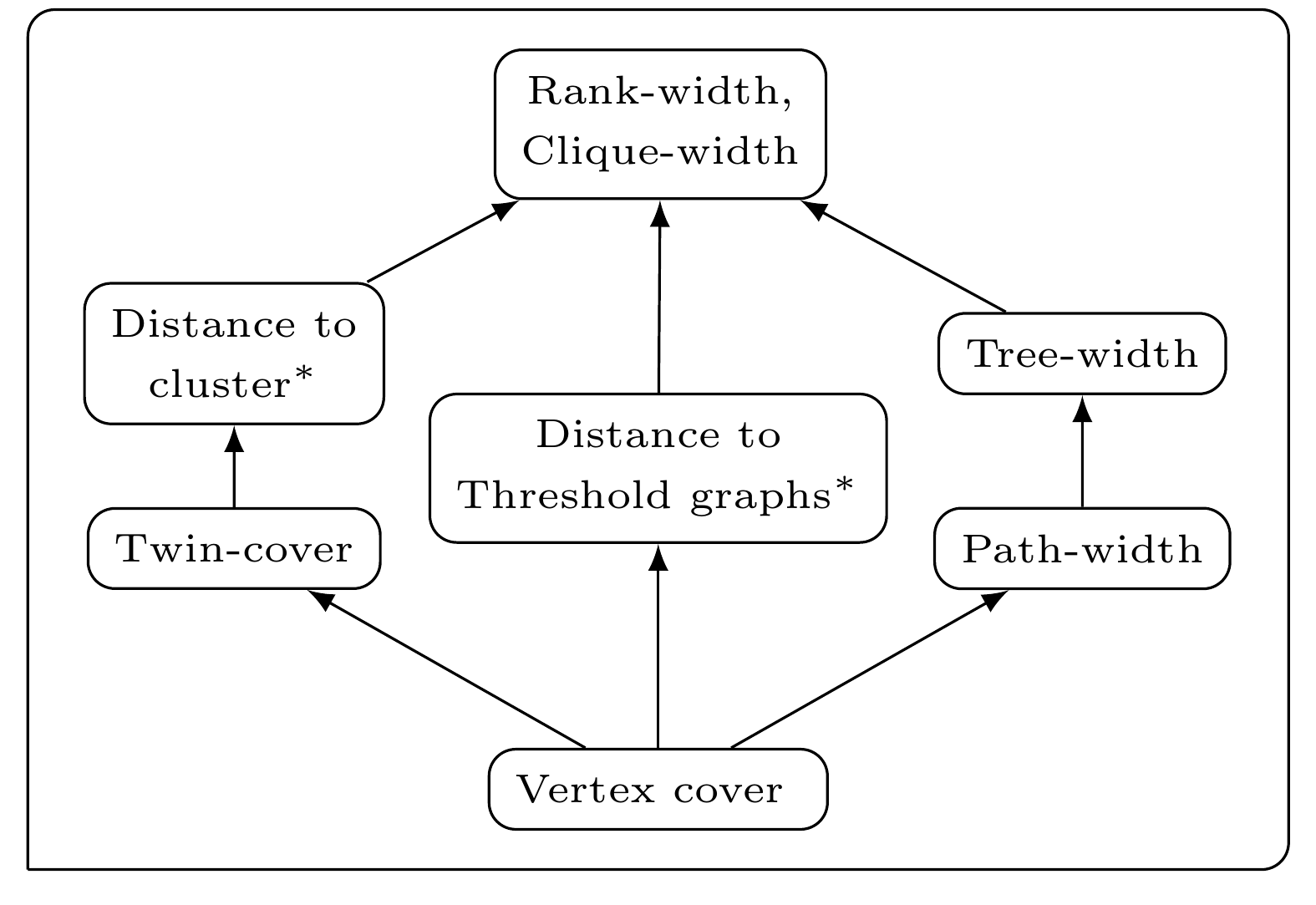}
 \caption{~A schematic showing the relation between the various parameters. An arrow from parameter $a$ to $b$ indicates that $a$ is larger than $b$. Parameters marked with $*$ are studied in this paper.}
\label{fig-parameters}
\end{figure}
\paragraph{ Related work.} Gargano et al.\cite{gargano2015complexity} studied the complexity of conflict-free colorings induced by the graph neighborhoods and showed that
the \cfcn{} 2-coloring and \cfon{} $k$-coloring are $\NP$-complete.
In the parameterized setting, both conflict-free closed and open neighborhood colorings are fixed-parameter tractable ($\fpt$), when parameterized by the vertex cover number or the neighborhood diversity of the graph \cite{gargano2015complexity}.
Ashok et al. \cite{ashok2015exact} showed that maximizing the number of conflict-free colored edges in hypergraphs is $\fpt$ when parameterized by the number of conflict-free edges in  solution.

\paragraph{ Our contributions.} 
In this paper, we give parameterized algorithms for \cfcn{} coloring and \cfon{} coloring problems with respect to various structural parameters. Gargano et al.\cite{gargano2015complexity} showed that both problems are $\fpt$ parameterized by vertex cover number. Both problems are $\fpt$ parameterized by tree-width, which follows from an application of Courcelle's Theorem \cite{courcelle1990monadic} and the fact that the \cfcn{} and \cfon{} problems can be expressed by a monadic second order (MSO) formula. 

In this paper, we focus is on \emph{distance-to-triviality} ~\cite{guo2004structural,cai2003parameterized} parameters. They measure how far a graph is from some class of graphs for which the problem is tractable.
Then, it is natural to parameterize by the distance of a general instance to a tractable class.
The main advantage of studying structural parameters is, if a problem is tractable on a class of graphs $\mathcal{F}$, 
then it is natural to expect the problem might be tractable on a class of graphs which are close to $\mathcal{F}$.  
Our notion of distance to a graph class is the vertex deletion distance. More precisely, for a class $\mathcal{F}$ of graphs we say that $X$ is a $\mathcal{F}$-modulator of a graph $G$ if there is a  subset $X\subseteq V(G)$ such that  $G \setminus X \in \mathcal{F}$. If the size of the smallest modulator to $\mathcal{F}$ is $k$, we also say that the distance of $G$ to the class $\mathcal{F}$ is $k$.

We study the parameterized complexity of the conflict-free coloring problems with respect to the distance from following graph class: \emph{cluster graphs} (disjoint union of cliques) and \emph{threshold graphs}. 
Studying the parameterized complexity of conflict-free coloring problem with respect to these parameters improves our understanding about the tractable parameterizations. 
For instance, the parameterization by the distance to cluster graphs directly generalizes vertex cover and is not comparable with tree-width (see Fig.~\ref{fig-parameters}). 
In particular we obtain the following results.
\vspace{-0.2cm}
\begin{itemize}
\item We show that both variants of conflict-free coloring problems are \fpt{} when parameterized by size of the modulator to cluster graphs (cluster vertex deletion number).
\item We show that \cfcn{} (resp. \cfon{}) coloring problem admits an additive $1$-approximation (resp. $2$-approximation) when parameterized by the size of the modulator to threshold graphs.
\item We show that on split graphs \cfcn{} coloring admits a polynomial time algorithm, where as \cfon{} coloring is $\NP$-complete.
\item We give polynomial time algorithms for both  \cfcn{} and \cfon{} coloring problems on cographs.
We show that interval graphs can be conflict-free colored using at most four colors.  
\end{itemize}

\section{Preliminaries}

In this section, we introduce the notation and the terminology that we will need to describe our algorithms. Most of our notation is standard. We use $[k]$ to denote the set $\{1,2,\ldots,k\}$. All graphs we consider in this paper are undirected, connected, finite and simple.
For a  graph $G=(V,E)$, let $V(G)$ and $E(G)$ denote the vertex set and edge set of $G$ respectively.
An edge in $E$ between vertices $x$ and $y$ is denoted as $xy$ for simplicity. 
For a  subset $X \subseteq V(G)$, the graph $G[X]$ denotes the subgraph of $G$ induced by vertices of $X$. Also, we abuse notation and use $G \setminus X$ to refer to the graph obtained from $G$ after removing the vertex set $X$. 
For a vertex $v\in V(G)$,
$N(v)$ denotes the set of vertices adjacent to $v$ and $N[v] = N(v) \cup \{v\}$ is the closed neighborhood of $v$. 
A vertex is called \emph{universal vertex} if it is adjacent to every other vertex of the graph.

\paragraph{Graph classes.} We now define the graph classes which are considered in this paper. 
 A graph is a \emph {split graph} if its vertices can be partitioned into a clique and an independent set.  Split graphs are $(2K_2,C_4,C_5)$-free. 
 The class of $P_4$-free graphs are called \emph{cographs}. 
 A graph is a \emph{threshold graph} if it can be constructed recursively by adding an isolated vertex or a universal vertex.
 A cluster graph is a disjoint union of complete graphs. Cluster graphs are $P_3$-free graphs.
 A graph $G$ is called an \emph{interval graph} if there exists a set $\{I_ v ~|~ v \in V(G) \}$ of
real intervals such that $I_u \cap I_v \neq \emptyset$  if and only if $(u,v) \in E(G)$.

 It is easy to see that a graph that is both split and cograph is a threshold graph. We denote  threshold graph (or a split graph) with $G = (C,I)$ where $C$ and $I$ denotes the partition
of $G$ into a clique and an independent set. For any two vertices $x, y$ in a threshold graph $G$ we have either $N(x)\subseteq N[y]$ or $N(y)\subseteq N[x]$. For a class of graphs $\mathcal{F}$, the distance to $\mathcal{F}$ of a graph
$G$ is the minimum number of vertices to be deleted from $G$ to get a graph in $\mathcal{F}$. 
An $\mathcal{F}$-modulator is the smallest possible
vertex subset $X$ for which $G \setminus X \in \mathcal{F}$. The size of a $\mathcal{F}$-modulator is a natural measure
of closeness to $\mathcal{F}$.

 \paragraph{Parameterized Complexity.} A parameterized problem denoted as $(I,k)\subseteq \Sigma^*\times \mathbb{N}$, where $\Sigma$ is fixed alphabet and $k$ is called the parameter. We say that the problem $(I,k)$ is {\it fixed parameter tractable} with respect to parameter $k$ if there exists an algorithm which solves the problem in time $f(k) |I|^{O(1)}$, where $f$ is a computable function. A kernel for a parameterized problem $\Pi$ is an algorithm which transforms an instance $(I,k)$ of $\Pi$ to an equivalent instance $(I',k')$ in polynomial time such that $k' \leq g(k)$ and $|I'| \leq f(k)$  for some computable functions $f$ and $g$. It is known that a parameterized problem is fixed parameter tractable if and only of it has a kernel. For a detailed survey of the methods used in parameterized complexity, we refer the reader to the texts \cite{CyganFKLMPPS15,downey2013fundamentals}.

Since cluster graphs are $P_3$-free and threshold graphs are $(P_4, C_4, 2K_2)$-free,  modulators can be computed in \fpt{} time. 
Therefore we assume that a modulator to these graph classes is given as a part of the input. 
The problems we consider in this paper are formally defined as follows:
We only define for closed neighborhood case and open neighborhood case can be defined similarly.
 
\problembox{\sc Conflict-free closed neighborhood coloring (CF-CNC)}{A graph $G$, a subset $X \subseteq V(G)$ such that $G \setminus X \in \mathcal{F}$ and an integer $k$.} {The size $d:=|X|$ of the modulator to $\mathcal{F}$.}{Does there exists a coloring $C_G: V(G) \rightarrow [k]$ such that for each $v \in V(G)$ the set $N[v]$ has a unique color with respect to $C_G$?}

\section{Parameterized Algorithms}
In this section, we give parameterized algorithms for conflict-free open/closed neighborhood coloring problems parameterized by cluster vertex deletion number and distance to threshold graphs.
\subsection{Parameterized by Cluster Vertex Deletion Number}
The cluster vertex deletion number (or distance to cluster) of a graph $G$ is the minimum number of vertices
that have to be deleted from $G$ to get a disjoint union of complete graphs or
cluster graph. Cluster vertex deletion number is
an intermediate parameter between vertex cover number and clique-width/rank-width \cite{doucha2012cluster}.
In this section we show that both variants of conflict free coloring problems are $\fpt$ parameterized by cluster vertex deletion number. The following lemma gives an upper bound on number of colors needed in a conflict-free coloring.
\begin{lemma}\label{lem-bound}
Let $X \subseteq V(G)$ of size $d$ such that $G \setminus X$ is a cluster graph then
$$\chi_{cf}[G] \leq d+2 \qquad \textrm {and} \qquad \chi_{cf}(G) \leq 2d+2$$
\end{lemma}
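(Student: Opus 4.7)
The plan for the bound $\chi_{cf}[G]\le d+2$ is the natural coloring: assign each vertex of $X$ its own private color from $\{1,\dots,d\}$, and in every clique $C_i$ of $G\setminus X$ pick one ``marker'' vertex, color it $d+1$, and color all remaining vertices of $C_i$ with $d+2$. A vertex $v\in X$ carries its own globally unique color, which therefore appears uniquely in $N[v]$. A vertex $v\in C_i$ sees the marker of $C_i$ in $N[v]$; since the closed neighborhood of a cluster-vertex meets $V\setminus X$ only inside its own clique and the palette of $X$ is disjoint from $\{d+1,d+2\}$, color $d+1$ occurs exactly once in $N[v]$.

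For $\chi_{cf}(G)\le 2d+2$ the same scheme breaks for an $X$-vertex $v$ all of whose neighbors lie in $V\setminus X$: the private color of $v$ is no longer in $N(v)$, and color $d+1$ need not be unique in $N(v)$. I would repair this with a short list of helpers. Color $X$ with $d$ distinct colors as before, and let $X'=\{v\in X:N(v)\cap X=\emptyset\}$. For each $v\in X'$ choose a neighbor $w_v\in N(v)$ (which exists by connectedness of $G$ and, by definition of $X'$, lies in $V\setminus X$), let $W=\{w_v:v\in X'\}$ so that $|W|\le|X'|\le d$, and paint each vertex of $W$ with a fresh globally unique color from $\{d+1,\dots,d+|W|\}$. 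For every clique $C_i$ that contains no helper but still contains a vertex having no $X$-neighbor, pick a ``cluster marker'' $u_i\in C_i$ that itself has an $X$-neighbor (available because connectedness of $G$ forces every clique of $G\setminus X$ to contain at least one vertex adjacent to $X$) and paint $u_i$ with a common color $c^{*}$; paint every remaining vertex of $V\setminus X$ with a final color $c^{**}$. The total palette has size at most $d+|W|+2\le 2d+2$.

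Correctness reduces to a four-case check on a vertex $v$: (i) $v\in X\setminus X'$ has an $X$-neighbor whose private color is globally unique; (ii) $v\in X'$ has $w_v\in N(v)$ with globally unique color; (iii) $v\in V\setminus X$ with $N(v)\cap X\neq\emptyset$ is handled as in (i); (iv) $v\in V\setminus X$ with $N(v)\cap X=\emptyset$ lies in some $C_i$, and $N(v)=C_i\setminus\{v\}$ contains either a helper in $W\cap C_i$ or the cluster marker $u_i$. The main obstacle is step (iv): the color $c^{*}$ is reused on markers in many different cliques, so one must check that two cluster markers never sit in the same open neighborhood. For $v\in V\setminus X$ this is fine because $N(v)$ meets $V\setminus X$ only inside $v$'s own clique, so exactly one $c^{*}$-marker is visible; for $v\in X$ several $c^{*}$-markers may appear in $N(v)$, but $v$ is already served by an $X$-neighbor or by $w_v$, so the collision does no harm.
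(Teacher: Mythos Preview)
Your proof is correct and follows the same scheme as the paper: for $\chi_{cf}[G]\le d+2$, private colors on $X$ and one designated vertex per clique; for $\chi_{cf}(G)\le 2d+2$, additionally one helper per $X$-vertex that has no $X$-neighbor, and one marker per remaining clique. The only substantive difference is in the open-neighborhood marker step: the paper recolors an \emph{arbitrary} vertex of each monochromatic clique with color $2d+1$, whereas you insist that the marker $u_i$ itself have an $X$-neighbor. Your extra care is not cosmetic---if the arbitrary choice in the paper's Step~4 lands on a vertex $u$ with $N(u)\cap X=\emptyset$ inside a clique of size at least three, then $N(u)$ consists of two or more vertices all colored $0$ and has no unique color---so your constraint on $u_i$ actually patches a small gap in the paper's argument, and your four-case verification makes the correctness explicit where the paper only writes ``it is easy to see''.
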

\begin{proof}
A \cfcn{} $(d+2)$-coloring $C_G$ of $G$ can be obtained as follows.
\begin{enumerate}
\item For each clique $C \in G \setminus X$, assign $C_G(u)=0$ for some vertex $u \in C$ and  for all 
$v \in C \setminus \{u\}$, assign color $C_G(v)=1$
\item For each $x \in X$ assign $C_G(x)$ a color from the set $\{2, \cdots, d+1\}$ that is not already used by $C_G$.
\end{enumerate}
According to this coloring, the unique color in $N[x]$ is $C_G(x)$ if $x \in X$ and $0$ if $x \in G \setminus X$.

A \cfon{} $(2d+2)$-coloring $C_G$ of $G$ can be obtained as follows.
\begin{enumerate}
\item For each vertex $u \in G \setminus X$, assign the color $C_G(u)=0$ 
\item For each $x \in X$ assign $C_G(x) \in \{1, \cdots, d\}$ that is not already used by $C_G$.
\item For a vertex $x \in X$ if $C_G(N(x))= \{0\}$, then recolor any one vertex in $N(x)$ with a color from $\{d+1, \cdots ,2d\}$ which is not already used by $C_G$.

\item For each clique $C \in G \setminus X$, if $C_G(C)=\{0\}$, then recolor an arbitrary vertex $u$ in $C$ with color $2d+1$.
\end{enumerate}
It is easy to see that above coloring is a \cfon{} coloring of $G$. \qed
\end{proof}

\begin{theorem} \label{th-dcluster}
Both \cfcn{} and \cfon{} coloring problems are fixed-parameter tractable when parameterized by the cluster vertex deletion number of the input graph.
 \end{theorem}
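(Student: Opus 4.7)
The plan is to leverage Lemma~\ref{lem-bound} to cap the palette at $O(d)$ colours, enumerate all colourings of the modulator $X$, and for each such guess solve the residual extension problem by formulating it as an Integer Linear Program (ILP).

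By Lemma~\ref{lem-bound}, a \cfcn{} yes-instance admits a colouring with $k := d+2$ colours (\cfon{}: $k := 2d+2$), so I would fix the palette $[k]$ and enumerate all $k^d = 2^{O(d \log d)}$ colourings $\psi : X \to [k]$. For each such $\psi$, the task reduces to extending it to a conflict-free colouring of $G$. Each vertex $v \in V(G) \setminus X$ carries a \emph{type} $t(v) := N(v) \cap X$, of which there are at most $2^d$ distinct ones. Writing $A_{C,c}$ for the number of vertices of clique $C$ coloured $c$, the \cfcn{} condition on $v \in C$ of type $t$ with colour $c_v$ reads: some $c' \in [k]$ satisfies $A_{C,c'} - [c' = c_v] + |\psi^{-1}(c') \cap t| = 1$. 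This expression depends on $A_{C,c'}$ only through $\min(A_{C,c'}, 3)$, so each clique admits at most $4^k = 2^{O(d)}$ relevant \emph{saturation patterns}; vertices of the same (type, colour) in a clique are interchangeable, and the condition on each $x \in X$ depends only on aggregated (type, colour) counts among its clique-side neighbours.

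I would then encode the residual existence question as a feasibility ILP whose variables count, for each triple (reduced type-profile of a clique, valid saturation pattern, per-(type, colour) assignment), the number of cliques realising that choice. Linear constraints would enforce the per-clique totals together with the conflict-free condition on every $x \in X$. Provided the variable count depends only on $d$, Lenstra's algorithm solves the ILP in \fpt{} time and returns a valid extension (or reports infeasibility); the \cfon{} case is handled analogously, with the only change being that the local CF expression excludes the indicator $[c' = c_v]$ (since $N(v)$ does not contain $v$) and that $k = 2d + 2$.

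The main obstacle I anticipate is controlling the number of ILP variables despite potentially many distinct clique type-profiles. I would overcome this by truncating each coordinate of the type-profile at a threshold $\Theta(d)$: once a type-count in a clique is sufficiently large, any additional vertex of that type can be added at a \emph{saturated} colour (one whose count is already $\geq 3$) without altering the $\min(A_{C,c}, 3)$-values or any conflict-free condition. This yields only $2^{O(d \log d)}$ distinct reduced profiles and, combined with the $2^{O(d)}$ saturation patterns, a total variable count that is a function of $d$ alone, as required for Lenstra's algorithm.
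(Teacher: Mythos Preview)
Your approach is valid but differs from the paper's. The paper proves fixed-parameter tractability by exhibiting a kernel of size $O(d^{2^d+2})$ via two reduction rules: the first truncates each type within a clique to at most $k+1$ vertices (essentially your truncation step), and the second bounds the number of cliques sharing any given type-vector to at most $d+1$. After both reductions the whole instance has bounded size and can be solved by brute force. You keep (a variant of) the first rule but replace the second by an ILP whose integer variables count how many cliques of each reduced profile use each colouring scheme, then invoke Lenstra's algorithm. The paper's route is more elementary and yields an explicit kernel; yours trades the second reduction for heavier machinery but is equally legitimate.

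That said, several points in your write-up need correction. First, the \cfcn{} formula is wrong: since $v\in N[v]$, the count of colour $c'$ in $N[v]$ is $A_{C,c'}+|\psi^{-1}(c')\cap t|$ with no subtraction; the term $-[c'=c_v]$ belongs to the \emph{open}-neighbourhood case, so you have the two variants swapped. Second, your bound of $2^{O(d\log d)}$ reduced type-profiles is too optimistic: with $2^d$ types each truncated to $\Theta(d)$ values you get $(\Theta(d))^{2^d}=2^{O(2^d\log d)}$ profiles. This is still a function of $d$ alone, so Lenstra applies, but the stated bound is incorrect. Third, and most substantively, you assert that ``linear constraints would enforce \dots\ the conflict-free condition on every $x\in X$'', but the condition ``some colour $c$ has count exactly $1$ in $N[x]$'' is a disjunction over $c$, not a single linear constraint. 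To make the ILP work you must additionally guess, for every $x\in X$, a witness colour $c_x\in[k]$ (at most $k^d=2^{O(d\log d)}$ further guesses) and then impose the linear equality
\[
|\psi^{-1}(c_x)\cap N[x]\cap X|\;+\;\sum_{P,A} y_{P,A}\cdot\Bigl(\sum_{t\ni x} A(t,c_x)\Bigr)\;=\;1.
\]
With this fix your argument goes through.
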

\begin{proof}
Let $G$ be a graph and $X \subseteq V(G)$ of size $d$ such that $G \setminus X$ is a cluster graph. 
\par {\bf \cfcn{} coloring} :
First we show that \cfcn{} coloring problem admits a kernel of size $O(d^{2^d+1})$.
Without loss of generality we assume that $k < d+2$ otherwise from Lemma~\ref{lem-bound} we can obtain a \cfcn{} coloring of $G$.
We partition the vertices of each clique $C$ in $G \setminus X$ based on their neighborhoods in $X$.
For every subset $Y \subseteq X$, $T_Y^C:=\{x \in C ~|~ N(x) \cap X=Y\}$. 
Notice that in this way we can partition vertices of a clique $C$ into at most $2^{d}$ subsets (called types), 
one for each $Y \subseteq X$.
We represent each clique $C$ in $G \setminus X$ with a vector $T^C$ of length $2^{d}$, where each entry of $T^C$ corresponds to a type and its value equals to the number of vertices in that type.

\begin{Reduction}
For a clique $C \in G \setminus X$, if a type $T_Y^C$ has more than $k+1$ vertices for some $Y \subseteq X$, then removing
all vertices except $k+1$ from $T^C_Y$ does not change $ \chi_{cf} [G]$. 
\end{Reduction}

\begin{proof}
Let $G_1$ be the graph obtained after applying the reduction rule $1$ on $G$.
Given a \cfcn{} coloring $C_{G_1}$ of reduced instance $G_1$, we extend it to a \cfcn{} coloring $C_G$ of $G$.
Let $C$ be a clique in $G \setminus X$ such that the number of vertices in $T_Y^C$ are more than $k+1$ for some $Y \subseteq X$.
We color the  deleted vertices in $T_Y^C$ as follows.
Since there are at least $k+1$ vertices in $T_Y^C$ after applying reduction rule, 
there exists two vertices $u$ and $v$ in $T_Y^C$ such that $C_G(v)=C_G(u)$. 
For each deleted vertex of $T_Y^C$ we assign the color $C_G(u)$. 
Since $C_G(u)$ is not unique in $N[v]$ for any $v \in V(G)$, it is easy to see that $C_G$ is a \cfcn{} coloring of $G$.
\end{proof}

After applying Reduction rule 1, each clique in $G_1 \setminus X$ has at most $k+1$ vertices in each type.
Now we partition the cliques in $G_1 \setminus X$ based on their type vector of length $2^d$. 
For every subset $S \subseteq \{0,1, \cdots, k+1\}^{2^d}$, $T_S^{G_1}:=\{ C \in G_1 \setminus X ~|~ T^C=S\}$. 
Notice that in this way we can partition cliques of $G_1 \setminus X$ into at most $({k+2})^{2^{d}}$ subsets (called mega types), 
one for each $S \subseteq \{0,1, \cdots, k+1\}^{2^d}$.

Let $\tau$ be an arbitrary but fixed ordering on $V(G_1)$. For a vertex $v$ in the modulator $X$ and a clique $C$ in $G_1 \setminus X$, we say that $C$ is critical for $v$ with respect to a \cfcn{} coloring if the first uniquely colored vertex in the neighborhood of $v$ belongs to $C$, where the notion of the first vertex is with respect to the ordering $\tau$. 

\begin{Reduction}
For a subset $S \subseteq \{0,1, \cdots, k+1\}^{2^d}$, If a mega type $T_S^{G_1}$ has more than $d+1$ cliques then removing all cliques except $d+1$ cliques from $T_S^{G_1}$ does not change the $ \chi_{cf} [G_1]$.
\end{Reduction}

\begin{proof}
Let $G_2$ be the graph obtained after applying the reduction rule $2$ on $G_1$.
Given a \cfcn{} coloring $C_{G_2}$ of reduced instance, we extend it to a \cfcn{} coloring $C_{G_1}$ of $G_1$.
Let $T_S^{G_1}$ be a mega type with more than $d+1$ cliques for some $S \subseteq \{0,1, \cdots, k+1\}^{2^d}$.
We color the  deleted cliques in $T_S^{G_1}$ as follows. First, for every vertex $v \in X$, we \textit{mark} a clique $C$ in $T_S^{G_2}$ if it is critical for $v$ with respect to $C_{G_2}$.  
Since there are $d+1$ cliques in $T_S^{G_2}$ after applying the reduction rule 2, at the end of the procedure above, at least one clique is not marked. Let this clique be $C$.

Note that reusing the colors of clique $C$ to color deleted cliques does not violate the uniqueness of a color in $N[x]$ for all $x \in X$.
So we recolor all deleted cliques according to the coloring of $C$.
\end{proof}
After applying the above reduction rules on input graph $G$, It is easy to see that the size of the reduced instance is at most $O((k+2)^{2^d}(d+1)(k+1))$. As
$k < d+2$, we get a kernel of size at most  $O(d^{2^d+2})$.

\par {\bf \cfon{} coloring} :
The proof of \cfon{} coloring is similar to \cfcn{} coloring except some minor changes in reduction rules.

\setcounter{Reduction}{0}
\begin{Reduction}
For a clique $C \in G \setminus X$, if a type $T_Y^C$ has more than $2k+1$ vertices for some $Y \subseteq X$, then removing
all vertices except $2k+1$ from $T_Y^C$ does not change $ \chi_{cf} (G)$. 
\end{Reduction}
\begin{proof}
Let $G_1$ be the graph obtained after applying the reduction rule $1$ on $G$.
Given a \cfon{} coloring $C_{G_1}$ of reduced instance $G_1$, we extend it to a \cfon{} coloring $C_G$ of $G$.
Let $C$ be a clique in $G \setminus X$ such that the number of vertices in $T_Y^C$ are more than $2k+1$ for some $Y \subseteq X$.
Given a \cfon{} coloring $C_G$ of reduced instance, we can color the  deleted vertices in $T_Y^C$ as follows.
Since there are at least $2k+1$ vertices in $T_Y^C$ after applying reduction rule, 
there exists three vertices $u$, $v$ and $w$ in $T_Y^C$ such that $C_G(v)=C_G(u)=C_G(w)$. 
For each deleted vertex of $T_Y^C$ we assign the color $C_G(u)$. 
Since $C_G(u)$ is not unique in $N(x)$ for any $x \in G$, 
it is easy to see that $C_G$ is a \cfon{} coloring of $G$.
\end{proof}

 \begin{Reduction}
For a subset $S \subseteq \{0,1, \cdots, k+1\}^{2^d}$, If  mega type $T_S^{G_1}$ has more than $d+1$ cliques then removing all cliques except $d+1$ cliques does not change $ \chi_{cf} (G_1)$.
\end{Reduction}

After applying the above two reduction rules on input graph $G$ we obtain a kernel of size at most 
$O(d^{2^d+2})$ for \cfon{} coloring problem.   \qed
\end{proof}

\subsection{Parameterized by Distance to Threshold Graphs}
In this section we obtain an additive one (resp. two) approximation algorithm for \cfcn{} (resp. \cfon{}) coloring of a graph parameterized by distance to threshold graphs in $\fpt$-time. 
\begin{theorem}\label{th-threshold}
Let $G$ be a graph and $X \subseteq V(G)$ of size $d$ such that $G \setminus X$ is a threshold graph. Then we can find a \cfcn{} $k$-coloring of $G$ such that  $\chi_{cf}[G] \leq k \leq \chi_{cf}[G]+1$ in $\fpt$-time.
\end{theorem}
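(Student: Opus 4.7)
The plan is to combine an FPT enumeration over colorings of the modulator $X$ with a polynomial-time extension phase that exploits the structural simplicity of threshold graphs; the additive $+1$ slack makes the extension step always succeed.

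First I would observe that any threshold graph $T$ satisfies $\chi_{cf}[T]\le 2$: every connected component of $T$ has a universal vertex $u$ (the last vertex added in the threshold construction order, which must be universal if the component has at least two vertices), and coloring $u$ with color $1$ and every other vertex of the component with color $2$ gives a valid \cfcn{} coloring of $T$. Consequently $\chi_{cf}[G]\le d+2$: assign each $x\in X$ its own private color from $\{3,\dots,d+2\}$ and color $T$ as above. Hence the optimum $k^\star:=\chi_{cf}[G]$ lies in $\{2,\dots,d+2\}$, which immediately bounds the search space for the target number of colors.

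For each candidate $k\in\{2,\dots,d+2\}$ I would enumerate all colorings $c_X\colon X\to[k]$; there are at most $(d+2)^d$ such choices, which is FPT in $d$. For each such $c_X$ I would also guess, for every $v\in X$, whether the unique color witnessing $v$'s closed-neighborhood constraint will come from $N_X[v]$ (``internal'') or from $N_T(v)$ (``external''), contributing another factor $2^d$. Given $k$, $c_X$, and these guesses, the task reduces to coloring $T$ with the $k+1$ available colors so that (i) each ``external'' $v\in X$ receives its uniquely colored witness inside $N_T(v)$, (ii) each ``internal'' $v\in X$ has its witness in $N_X[v]$ preserved (no color collision coming from $N_T(v)$), and (iii) every $v\in T$ has a uniquely colored vertex in $N_X(v)\cup N_T[v]$.

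To solve this constraint-satisfaction problem on $T$ in polynomial time I would exploit the threshold property that for any $u,v\in V(T)$ either $N_T(u)\subseteq N_T[v]$ or $N_T(v)\subseteq N_T[u]$, which totally orders the open neighborhoods in $T$ by inclusion. Processing the vertices of $T$ in this inclusion order admits a greedy assignment that uses the colors $[k]$ whenever possible and invokes the safety color $k+1$ only when the $X$-imposed constraints conflict with the threshold structure. The main obstacle is proving that the safety color stays unique in every closed neighborhood in which it appears: this reduces to showing that the vertices forced to receive color $k+1$ form an antichain in the inclusion order of $T$-neighborhoods, which follows from the chain structure of threshold graphs once one chooses the recipient of the safety color carefully inside each problematic neighborhood. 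Once this structural lemma is in place, returning the smallest $k$ for which some guess succeeds yields a \cfcn{} coloring of $G$ with at most $\chi_{cf}[G]+1$ colors, as required.
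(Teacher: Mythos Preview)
Your extension phase is where the argument breaks down. You promise a polynomial-time greedy over the threshold vertices that simultaneously respects three families of constraints and invokes a ``safety'' color $k+1$ only sparingly, but you never specify the greedy rule, and your correctness claim rests on the assertion that the safety-color recipients form an antichain in the neighborhood-inclusion order. In a connected threshold graph that order is a \emph{chain}, so an antichain is a single vertex; yet you do not say which vertex receives the safety color, nor do you argue that coloring that one vertex with $k+1$ makes the remaining constraints satisfiable with the original $k$ colors. In particular, for the guess $c_X=C^\star|_X$ coming from an optimal coloring $C^\star$, you still need to show that your unspecified greedy actually finds a valid extension; nothing in the proposal establishes this, and guessing only the colors on $X$ together with an internal/external flag per modulator vertex does not obviously pin down enough information to make a greedy sweep over $T$ succeed.

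The paper's argument is shorter and sidesteps the greedy entirely. It isolates the subproblem of coloring $H:=G[X\cup N(X)]$ with all edges inside $N(X)$ deleted, asking only that every $x\in X$ have a uniquely colored vertex in $N_H[x]=N_G[x]$. The optimum $p$ of this ``partial \cfcn{}'' problem is a lower bound on $\chi_{cf}[G]$, and since $X$ is a vertex cover of $H$ it can be computed exactly in \fpt{} time by reusing the cluster-vertex-deletion kernel of Theorem~\ref{th-dcluster}. Having fixed such an optimal partial coloring, the paper colors all remaining threshold vertices with an already-used color and then \emph{recolors the universal vertex} of $G\setminus X$ with one fresh color; this single vertex lies in $N[v]$ for every $v\in G\setminus X$, so all their constraints are met at the cost of exactly one extra color, giving $k=p+1\in[\chi_{cf}[G],\chi_{cf}[G]+1]$. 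Your antichain remark is essentially groping toward this same universal-vertex trick, but the proposal as written does not arrive there.
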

\begin{proof}
Let $G'=G[X \cup N(X)]$ is a subgraph of $G$ induced by $X$ and its neighbors. Let $H$ be a graph obtained from 
$G'$ by removing all edges between vertices of $N(X)$. The set $X$ is a vertex cover of $H$ of size $d$.
We divide the problem into two subproblems. First, partial \cfcn{} coloring of $H$: color the vertices of $H$ using the minimum number of colors such that for each $x \in X$, $N[x]$ has a unique color.
Second, color the vertices of $V(G) \setminus V(H)$ such that $N[x]$ has a unique color for each $x \in G \setminus X$.
It is easy to see that $\chi_{cf}[G]$ is at least the number of colors needed in a partial \cfcn{} coloring of $H$.

Now, we give a procedure to find a partial \cfcn{} coloring of $H$. 
We partition the vertices of $N(X)$ based on their neighborhoods in $X$ into at most $2^{d}$ subsets (called types).
For every subset $Y \subseteq X$, $T_Y^H:=\{x \in N(X) ~|~ N(x) \cap X=Y\}$. 
Since every vertex cover of a graph is also a cluster vertex deletion set, therefore
by following the proof of Theorem~\ref{th-dcluster} we get a partial \cfcn{} coloring of $H$. 

Color all vertices of $(G \setminus X) \setminus (N(X))$ with any arbitrary color used in $H$. Since these vertices are 
non-neighbors of $X$, this step does not disturb the existence of unique color in $N[x]$ for all $x \in X$. 
Recall that every connected threshold graph has a universal vertex. 
Recolor the universal vertex $u \in G \setminus X$ with a new color which is not used in $G$. This new color is unique color in 
$N[x]$ for all $x \in G \setminus X$. This completes the proof of \cfcn{} coloring. \qed
\end{proof}

\begin{corollary}
Let $G$ be a graph and $X \subseteq V(G)$ of size $d$ such that $G \setminus X$ is a threshold graph. Then we can find a \cfon{} $k$-coloring of $G$ such that  $\chi_{cf}(G) \leq k \leq \chi_{cf}(G)+2$ in $\fpt$-time.
\end{corollary}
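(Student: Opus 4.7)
The plan is to mirror the argument of Theorem~\ref{th-threshold}, reusing the same auxiliary graph $H$ and partial-coloring subroutine, and spending one additional color beyond the closed-neighborhood case to handle the fact that a vertex is not its own witness for open neighborhoods.

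First I would construct $H$ from $G' = G[X \cup N(X)]$ by deleting the edges lying inside $N(X)$, so that $X$ is a vertex cover of $H$ of size $d$ and $N_H(x) = N_G(x)$ for every $x \in X$. Call a coloring of $H$ a \emph{partial \cfon{} coloring} if every $x \in X$ (and every vertex of $G\setminus X$ whose $G$-neighborhood lies entirely inside $X$) has a uniquely colored vertex in its neighborhood. Restricting any \cfon{} coloring of $G$ to $V(H)$ yields such a partial coloring, so the optimum needs at most $\chi_{cf}(G)$ colors; an optimal partial coloring can be computed in \fpt{}-time by a minor adaptation of the reduction rules of Theorem~\ref{th-dcluster} applied to $H$ with modulator $X$, requiring uniqueness only for the specified neighborhoods. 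I would then extend this coloring to $V(G)\setminus V(H)$ using an arbitrary color already used in $H$; since these vertices are non-neighbors of $X$, the uniqueness condition for $x \in X$ is preserved.

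Next, every connected threshold graph has a universal vertex; let $u$ be the universal vertex of the main connected component of $G\setminus X$. I would introduce two fresh colors $c_1, c_2$ and perform two recolorings: (i) recolor $u$ with $c_1$, making $c_1$ a globally unique witness in $N_G(v)$ for every $v\ne u$ in the component (since $u \in N_G(v)$ for such $v$); and (ii) recolor any neighbor $w$ of $u$ (for example, any other vertex of the component) with $c_2$, so that $c_2$ is uniquely colored in $N_G(u)$. The need for the second recoloring, and hence the second extra color, is exactly the price of $u \notin N_G(u)$; this is the one place the open-neighborhood setting genuinely differs from Theorem~\ref{th-threshold}. Isolated vertices of $G\setminus X$ are already handled by the partial coloring on $H$.

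Correctness of the two recolorings against the modulator then reduces to the same observation as in Theorem~\ref{th-threshold}: replacing a vertex's color by a globally fresh color cannot destroy a uniquely colored witness in $N_G(x)$ for $x \in X$, because if the old witness was the recolored vertex then it remains uniquely colored under its new (fresh) color, and otherwise the old witness is untouched. The total number of colors is at most $\chi_{cf}(G) + 2$. The only step that I expect to require genuine care beyond Theorem~\ref{th-threshold} is the mild generalization of the partial-coloring subroutine so that it also enforces uniqueness for the neighborhoods of vertices of $G\setminus X$ that are isolated inside $G\setminus X$; this is a direct adjustment of the reduction rules and does not change the \fpt{} running time.
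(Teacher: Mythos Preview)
Your proposal is correct and follows essentially the same approach as the paper: the paper's proof simply says to mirror Theorem~\ref{th-threshold} and, at the end, recolor the universal vertex of $G\setminus X$ together with one further arbitrary vertex of $G\setminus X$ using two fresh colors. Your extra bookkeeping for vertices of $G\setminus X$ whose entire $G$-neighborhood lies in $X$ is a sensible refinement that the paper's one-line proof does not spell out.
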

\begin{proof}
The proof of \cfon{} coloring is similar to Theorem~\ref{th-threshold}, except at the end we need to recolor two vertices, universal vertex and some arbitrary vertex of $G \setminus X$ with two new colors. Hence we get a \cfon{} $k$-coloring of $G$ such that  $\chi_{cf}(G) \leq k \leq \chi_{cf}(G)+2$. 
\end{proof}

\section{Special Graph Classes}
In this section we study the complexity of \cfcn{} and \cfon{} coloring problems for bipartite graphs, split graphs, interval graphs and cographs. For a bipartite graph $G=(A,B,E)$, the \cfcn{} coloring can be obtained by coloring all vertices of $A$ with color $0$ and $B$ with $1$. The \cfon{} coloring for bipartite graphs is $\NP$-complete (follows from Theorem~3 in \cite{gargano2015complexity}). We show that \cfcn{} coloring can be solved in polynomial time on split graphs, whereas \cfon{} is coloring $\NP$-complete. For cographs we prove that both \cfcn{} and \cfon{} coloring problems can be solved in polynomial time. For interval graphs we give upper bounds on the number of colors needed for both \cfcn{} and \cfon{} colorings. 
\subsection{Split Graphs}
We begin by showing that \cfcn{} coloring can be solved in polynomial time on split graphs. 
\begin{lemma}\label{lem-cfcn-upper}
Let $G$ be a split graph with at least one edge, then $$2 \leq \chi_{cf}[G] \leq 3$$ 
\end{lemma}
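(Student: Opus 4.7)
The plan is to establish the two inequalities separately. For the lower bound $\chi_{cf}[G] \geq 2$, I would observe that if only one color is used, then for any endpoint $v$ of the edge guaranteed by the hypothesis we have $|N[v]| \geq 2$ with every vertex of $N[v]$ carrying the same color, so no color is unique in $N[v]$. Hence any valid \cfcn{} coloring of $G$ uses at least two colors.

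For the upper bound, I would exhibit an explicit coloring with three colors. Write $G = (C, I)$ for the split partition; since $G$ has an edge we have $C \neq \emptyset$. Pick an arbitrary vertex $u \in C$ and define $C_G \colon V(G) \to \{1,2,3\}$ by setting $C_G(u) = 3$, $C_G(w) = 1$ for every $w \in C \setminus \{u\}$, and $C_G(w) = 2$ for every $w \in I$.

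I would then verify the conflict-free property in two cases. For $v \in C$, the closed neighborhood $N[v]$ contains all of $C$ (since $C$ is a clique), in particular the vertex $u$; because $u$ is the only vertex of $G$ colored $3$, color $3$ is unique in $N[v]$. For $v \in I$, every neighbor of $v$ lies in $C$ and is therefore colored $1$ or $3$, so $v$ itself is the only vertex of $N[v]$ receiving color $2$; this also covers isolated vertices of $I$, for which $N[v] = \{v\}$.

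I do not anticipate a serious obstacle here, since the argument is essentially a direct construction. The only points requiring mild care are the degenerate configurations $|C|=1$ and $I = \emptyset$, but both are handled by the same coloring (and in the former case only two of the three colors are actually used, which remains consistent with the stated upper bound).
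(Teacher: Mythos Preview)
Your proposal is correct and follows essentially the same approach as the paper: the paper also picks one distinguished vertex of $C$, gives it a unique color, colors the rest of $C$ with a second color, and colors $I$ with a third. Your write-up is in fact more complete than the paper's, which simply assumes $|C|\geq 3$, $|I|\geq 3$ and asserts the lower bound and the verification without argument.
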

\begin{proof}
Let $G=(C,I)$ be a split graph. We assume that $|C| \geq 3$ and $|I| \geq 3$. Clearly $2 \leq \chi_{cf}[G]$ and we obtain a \cfcn{} 3-coloring $C_G$ of $G$ as follows. 
For a vertex $v \in C$,  assign $C_G(v)=0$ and for all $u \in C \setminus \{v\}$ assign $C_G(u)=1$.
Color all vertices of independent set $I$ with $2$. It is easy to see  that $C_G$ is a \cfcn{} coloring of $G$.
Therefore we have $2 \leq \chi_{cf}[G] \leq 3$. \qed
\end{proof}
In the following lemma we give a characterization of split graphs that admit a \cfcn{} coloring using two colors.
If there is a universal vertex in the graph then coloring it with one color and the rest with second color gives a \cfcn{} coloring.
We assume that graph does not contain universal vertices. 
\begin{lemma}\label{lem-cfcn-split}
 Let $G=(C,I)$ be a split graph without universal vertex. $\chi_{cf}[G]=2$ if and only if for all $v \in C$, $|N(v) \cap I|=1$.
\end{lemma}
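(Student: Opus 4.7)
The plan is to prove the two implications separately, with the forward (necessity) direction as the substantive half.

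For the backward (sufficiency) direction, I would exhibit an explicit $2$-coloring under the hypothesis. Define $c(v)=0$ for every $v \in C$ and $c(u)=1$ for every $u \in I$. Since each $v\in C$ has, by assumption, a unique neighbor $u_v\in I$, we have $N[v]=C\cup\{u_v\}$, so color $1$ appears exactly once in $N[v]$, namely at $u_v$. For each $u\in I$, $N[u]\subseteq\{u\}\cup C$ and $u$ itself is the only color-$1$ vertex in $N[u]$. Hence $c$ is a conflict-free coloring, giving $\chi_{cf}[G]\le 2$; combined with $\chi_{cf}[G]\ge 2$ from Lemma~\ref{lem-cfcn-upper}, equality follows.

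For the forward (necessity) direction, I fix a conflict-free $2$-coloring $c$ of $G$ and write $C_i=C\cap c^{-1}(i)$ and $I_i=I\cap c^{-1}(i)$ for $i\in\{0,1\}$. The key structural observation is that $N[v]\supseteq C$ for every $v\in C$ because $C$ is a clique, so the color counts in $N[v]$ are at least $|C_0|$ and $|C_1|$ respectively. If both of these were at least $2$, no color in $N[v]$ could have count $1$, contradicting conflict-freeness; hence $\min(|C_0|,|C_1|)\le 1$, and without loss of generality $|C_0|\le 1$. I would then split into two sub-cases.

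In the sub-case $|C_0|=0$, every $v\in C$ has color $1$, so the color-$1$ count in $N[v]$ is at least $|C|\ge 2$; the unique color must therefore be $0$ with count exactly $1$, forcing $|N(v)\cap I_0|=1$. It remains to show $N(v)\cap I_1=\emptyset$: if some $u\in I_1$ were adjacent to a vertex of $C$, then $N[u]\subseteq\{u\}\cup N(u)$ would have color-$0$ count $0$ and color-$1$ count at least $2$, violating uniqueness. Hence $|N(v)\cap I|=1$, as claimed. In the sub-case $|C_0|=1$ with $C_0=\{w\}$, the color-$1$ count in $N[v]$ is at least $|C|-1$; assuming $|C|$ is large enough, the unique color must still be $0$, forcing $N(v)\cap I_0=\emptyset$ for every $v\in C$. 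Then the analysis of neighborhoods of $I$-vertices shows that every non-trivial $u\in I$ must be adjacent to $w$, which combined with $w$ being adjacent to every other vertex of $C$ would make $w$ universal, contradicting the hypothesis. This rules out the sub-case $|C_0|=1$ and leaves only the first sub-case, completing the argument.

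The main obstacle is the $|C_0|=1$ sub-case: here the unique color in $N[v]$ is pinned to $w$ for every $v\in C$, so the constraint $|N(v)\cap I|=1$ is not immediate from the closed neighborhood of $v$ alone but must be extracted from the closed neighborhoods of $I$-vertices, and it is precisely at this point that the ``no universal vertex'' hypothesis has to be invoked to derive the desired contradiction.
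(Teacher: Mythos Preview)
Your approach mirrors the paper's: first argue that the clique must be (essentially) monochromatic in any \cfcn{} $2$-coloring, then read off the degree condition on $I$-neighbours. Your case split via $\min(|C_0|,|C_1|)\le 1$ is in fact cleaner and more explicit than the paper's rather informal forward direction.

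However, the caveat you flag---``assuming $|C|$ is large enough'' in the sub-case $|C_0|=1$---is not a cosmetic gap you can patch later: the lemma as stated is \emph{false} when $|C|=2$. Take $C=\{w,x\}$ and $I=\{u,u',u''\}$ with edges $wx,\ xu,\ wu',\ wu''$. This is a connected split graph with no universal vertex, and the coloring $c(w)=c(u)=0$, $c(x)=c(u')=c(u'')=1$ is a valid \cfcn{} $2$-coloring (check: $N[w]$ has colors $0,1,1,1$; $N[x]$ has $0,1,0$; each pendant has $0,1$), yet $|N(w)\cap I|=2$. So your sub-case $|C_0|=1$ genuinely cannot be eliminated when $|C|=2$, and the forward implication fails there. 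The paper's proof glosses over this too; the lemma should be read under the tacit assumption $|C|\ge 3$ carried over from the proof of Lemma~\ref{lem-cfcn-upper}. Under that assumption your argument for both sub-cases is correct and complete.
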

\begin{proof}
For the reverse direction, assume for all $v \in C$, $|N(v) \cap I|=1$ then color all vertices of $C$ with $0$ and $I$ with $1$ respectively to get a \cfcn{} coloring of $G$ with two colors.

For the forward direction, assume $\chi_{cf}[G]=2$. 
First, we show that all vertices in the clique get the same color. 
Suppose not, then at least one vertex $v$ in the clique has another color.
If $N(v) \cap I = \emptyset$ and let $v'\in C$ such that $u' \in N(v') \cap I$ 
then color of $u'$ have to be different from $v'$, then $v'$ don not have unique color in its neighborhood.
If $ u \in N(v) \cap I$ then since $v$ is not universal vertex there is a vertex $u'$ in $I$ not adjacent to $v$. 
Let $v'$ be the neighbor of $u'$ in $C$. Then it is easy to see that either $u'$ or $v'$ have no uniquely colored vertex in any coloring. 

If all vertices in the clique have the same color, then all vertices in $I$ must have a different color as otherwise their closed neighborhood would be in conflict. If $C$ and $I$ are completely colored as described above, then every vertex in $C$  must have exactly one neighbor in $I$ to make its neighborhood conflict-free. This completes the proof of the lemma.
\end{proof}
\begin{theorem}
The \cfcn{} coloring problem is polynomial time solvable on the class of split graphs.
\end{theorem}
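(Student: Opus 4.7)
The plan is to turn Lemmas~\ref{lem-cfcn-upper} and \ref{lem-cfcn-split} into a simple decision procedure. Given an input split graph $G$, first compute a split partition $G = (C,I)$; this is known to be doable in linear time. Then handle the trivial edge-free case by outputting the $1$-coloring. Otherwise, by Lemma~\ref{lem-cfcn-upper}, we know $\chi_{cf}[G] \in \{2,3\}$, so the algorithm only needs to decide between these two values, and in either case exhibit an explicit conflict-free coloring.

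Next, I would dispatch the ``universal vertex'' case: if some vertex $u$ is adjacent to all others, assign $u$ color $0$ and every other vertex color $1$; then $N[u]$ has the unique color $0$, and for every other $v$, $u \in N[v]$ is uniquely colored $0$, so this is a valid \cfcn{} $2$-coloring. If there is no universal vertex, I would check the condition of Lemma~\ref{lem-cfcn-split}: iterate over each $v \in C$ and count $|N(v) \cap I|$. If every such count is exactly $1$, the lemma guarantees $\chi_{cf}[G] = 2$, and the coloring that colors all of $C$ with $0$ and all of $I$ with $1$ is conflict-free. If some $v \in C$ violates the condition, the lemma forces $\chi_{cf}[G] \geq 3$, so by Lemma~\ref{lem-cfcn-upper} we output $3$ together with the explicit \cfcn{} $3$-coloring constructed in that lemma (pick any $v \in C$, color it $0$, color the rest of $C$ with $1$, and all of $I$ with $2$).

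Correctness follows directly from the two lemmas: Lemma~\ref{lem-cfcn-split} characterizes exactly when two colors suffice (after the separate universal-vertex case is dispatched), and Lemma~\ref{lem-cfcn-upper} ensures three colors always suffice. For running time, computing the split partition, detecting a universal vertex, and evaluating $|N(v) \cap I|$ for every $v \in C$ are all polynomial; producing the coloring is linear in $|V(G)|$. There is essentially no obstacle here beyond the careful case split on the presence of a universal vertex; the main work has already been done in the preceding lemmas, so this theorem is really a short algorithmic wrap-up establishing that the decision ``$\chi_{cf}[G] \leq 2$?'' reduces to a polynomial-time structural check on the split partition.
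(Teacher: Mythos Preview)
Your proposal is correct and follows essentially the same approach as the paper: the paper's proof is literally the one line ``Follows from Lemma~\ref{lem-cfcn-upper} and Lemma~\ref{lem-cfcn-split},'' and you have simply spelled out the obvious algorithm that this entails.
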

\begin{proof}
Follows from Lemma~\ref{lem-cfcn-upper} and Lemma~\ref{lem-cfcn-split}.
\end{proof}

Now we show that \cfon{} coloring problem is $\NP$-complete for split graphs. 
\begin{theorem}
The \cfon{} coloring is $\NP$-complete on the class of split graphs.
\end{theorem}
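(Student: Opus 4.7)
The plan is to establish NP-completeness in the usual two steps. Membership in NP is routine: a guessed coloring can be verified in polynomial time by, for each vertex $v$, counting the multiplicities of each color appearing in $N(v)$ and checking that some color appears exactly once.

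For NP-hardness, my approach is a reduction from the \cfon{} coloring decision problem on general graphs, which is already known to be NP-hard by Gargano et al.~\cite{gargano2015complexity}. Given an instance $(H,k)$ of that problem, I would build a split graph $G = (C,I)$ that embeds the open-neighborhood hypergraph of $H$ into the bipartite clique/independent-set structure. Concretely, take $C := V(H)$ and make it a clique, and for each vertex $v \in V(H)$ introduce a witness vertex $u_v \in I$ with $N_G(u_v) := N_H(v)$. Under any coloring $c$ of $G$, the \cfon{} constraint at $u_v$ says exactly that $N_H(v)$ contains a uniquely colored vertex, which is precisely the \cfon{} constraint at $v$ in $H$. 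So the restriction of $c$ to $C$ is a \cfon{} coloring of $H$ whenever all the $u_v$ constraints are met.

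The principal obstacle is that $G$ carries additional constraints absent in $H$: for every $v \in C$, the set $(C \setminus \{v\}) \cup \{u_w : v \in N_H(w)\}$ must also have a unique color. These clique-side constraints could in principle allow $G$ to be colored with fewer colors than $H$ needs (via clever use of $I$), or conversely force extra colors beyond what $H$ needs. To decouple them, I would attach a small constant-size gadget of new clique and independent-set vertices that uses $O(1)$ dedicated colors to create a permanent ``anchor'' color serving as the unique color for every clique vertex's open neighborhood, irrespective of how the rest of $G$ is colored. A canonical choice is to add one vertex $a \in I$ adjacent to all of $C$ and give it a fresh private color; then every $v \in C$ sees $a$ uniquely in $N(v)$, automatically satisfying its clique-side constraint. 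One must then verify that the new vertex $a$ itself has a uniquely colored neighbor (which can be ensured by a second auxiliary gadget fixing a single element of $C$ to another fresh color).

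With the gadget in place the equivalence becomes clean. For the forward direction, given a \cfon{} $k$-coloring $c$ of $H$, color $C$ by $c$, color each $u_v$ arbitrarily (say with an existing color), and use the two reserved gadget colors on the anchor vertices; this yields a \cfon{} $(k+O(1))$-coloring of $G$. For the reverse direction, any \cfon{} $(k+O(1))$-coloring of $G$ restricted to $V(H)$, after identifying and removing the $O(1)$ gadget colors, furnishes a \cfon{} $k$-coloring of $H$ via the $u_v$ constraints. The hard part of the writeup will be checking that the gadget cannot be exploited to ``cheat'' the color budget — specifically, that a solution to $G$ with fewer colors than $k+O(1)$ would imply a solution to $H$ with fewer than $k$ colors — which I expect to follow from a short case analysis on how colors can be reused between $C$ and the anchor vertices.
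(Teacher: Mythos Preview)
Your reduction has a genuine gap in the reverse direction, and the ``short case analysis'' you are hoping for does not exist. From a \cfon{} $(k+c)$-coloring of $G$ you correctly observe that the restriction to $C=V(H)$ is a \cfon{} coloring of $H$ (via the $u_v$ constraints), but nothing forces that restriction to use at most $k$ colors rather than $k+c$. Your anchor $a$ does not help here: there is no mechanism that pins the color of $a$ (or of the second gadget vertex) to a value \emph{outside} the palette used on $C$, so an adversary coloring $G$ can reuse colors freely between $C$ and $I$. Concretely, the offset $\chi_{cf}(G)-\chi_{cf}(H)$ is not a fixed constant. For $H=K_2$ your construction (with anchor) satisfies $\chi_{cf}(G)=2=\chi_{cf}(H)$, whereas for $H=C_4$ one checks that any $2$-coloring of $C$ meeting the $u_v$ constraints leaves $N(a)=C$ with colour multiset $\{1,1,2,2\}$, so $\chi_{cf}(G)\geq 3>2=\chi_{cf}(H)$. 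Since the offset varies with $H$, no choice of $k'=k+c$ makes $(G,k')$ equivalent to $(H,k)$, and the reduction fails as a polynomial-time many-one reduction.

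The paper sidesteps this difficulty by reducing from \textsc{Graph Coloring} rather than from \cfon{} itself. For each edge $uv$ of (an augmented) $G'$ it adds an independent vertex $I_{uv}$ with $N(I_{uv})=\{u,v\}$; the \cfon{} constraint at $I_{uv}$ is then literally $c(u)\neq c(v)$, so the clique part is forced to be \emph{properly} coloured. Two universal vertices $x,y$ added to the source graph are therefore forced to receive two colours not used anywhere else on $V(G)$, which simultaneously (i) accounts for exactly the two extra colours and (ii) gives every clique vertex its uniquely coloured neighbour. This rigidity of proper colouring is precisely what lets the paper prove the tight equivalence $\chi(G)\le k \Leftrightarrow \chi_{cf}(H)\le k+2$; your source problem (\cfon{}) is too flexible to give that kind of exact colour accounting.
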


\begin{proof}
We give a reduction from the well-known $\NP$-complete \textsc{Graph Coloring} problem~\footnote{assignment of colors to the vertices of a graph such that no two adjacent vertices have the same color.}.
Given an instance  $(G,k)$ of graph coloring, define a graph $G'$ with $V(G')=V(G) \cup \{x,y\}$ and $E(G') = E(G) \cup  \{xy\} \cup \{xv, yv ~|~ v \in V(G)\}$.
i.e., $x$ and $y$ are universal vertices in graph $G'$.
Construct a split graph $H=(C,I)$ as follows.  

$$ V(H)= V(G') \cup \{I_{uv} ~|~ uv \in E(G')\} \text \qquad { and } $$ 
$$E(H)=  ~\{uv ~|~ u,v \in V(G')\} ~\cup ~\{uI_{uv},vI_{uv}\ ~|~ uv \in E(G') \}$$

It is easy to see that the graph $H$ is a split graph with clique $C=H[V(G')]$ and independent set $I=H[V(H) \setminus V(G')]$.
The construction of graph $H$ can be done in polynomial time. 
We show that for any $k \geq 3$, the graph $G$ is $k$-colorable if and only if the split graph $H$ has a \cfon{} $(k+2)$-coloring.

Let $C_G$ be a $k$-coloring of $G$.
We construct a \cfon{} $(k+2)$-coloring $C_H$ of $H$ as follows.
$C_H(v)=C_G(v)$ for all $v \in V(G)$, $C_H(x)=k+1$, $C_H(y)=k+2$ and color all independent set vertices of 
$H$ with color $k$.
Now we show that $C_H$ is a \cfon{} $(k+2)$-coloring of $H$.
According to the coloring $C_H$, the unique color in $N(v)$ is $k+2$ if $v \in V(G) \cup \{x\}$ and $k+1$ if $v=y$.   
For all $I_{uv} \in V(H)$, $N(I_{uv})=\{u,v\}$ and we have $C_H(u) =C_G(u)\neq C_G(v)=C_H(v)$ for all $uv \in E(G')$, therefore 
$N(I_{uv})$ has a unique color.

For the reverse direction, Let $C_H$ be a \cfon{} $(k+2)$-coloring of $H$, 
we show that $C_H$ when restricted to vertices of $G$ gives a $k$-coloring of $G$.  
By the construction, for any $I_{uv} \in V(H)$, we have $N( I_{uv}) = \{ u , v \}$,  therefore $C_H(u) \neq C_H(v)$ which implies
$C_H$ when restricted to vertices of $G'$ gives a proper coloring of $G'$ with $k+2$ colors.

Now we show that the color of $x$ (resp $y$) is unique in $V(G')$.
For every $v \in V(G)$ we have $N(I_{vx})=\{v,x\}$, implies $C_H(x) \neq C_H(v)$ for all $v \in V(G)$.
Similarly $C_H(y) \neq C_H(v)$ for all $v \in V(G)$ and $C_H(x) \neq C_H(y)$ as $N( I_{xy}) = \{ x , y \}$.
Therefore $C_H$ when restricted to $V(G)$ gives a $k$-coloring of $G$. \qed
\end{proof}
\subsection{Interval Graphs}
In this section, we give upper bounds on the number of colors needed for both \cfcn{} and \cfon{} coloring problems for interval graphs. 
Let $G$ be an interval graph and $\mathcal{I}$ be an \emph{interval representation} of $G$, i.e., 
there is
a mapping from $V(G)$ to closed intervals on the real line such that for any
two vertices $u$ and $v$, $uv \in E(G)$ if and only if $I_u \cap I_v \neq \emptyset$.
For any interval
graph, there exists an interval representation with all endpoints distinct. Such
a representation is called a \emph{distinguishing interval representation} and it can be computed starting from an
arbitrary interval representation of the graph. 
Interval graphs can be recognized in linear time and an interval representation can be obtained in linear time \cite{booth1976testing}.
 Let $l(I_u)$ and $r(I_u)$ denote the left and right end points of the interval corresponding
to the vertex $u$ respectively. We say that an interval $I \in \mathcal{I}$ is \emph{rightmost interval} if $r(J) \leq r(I)$ for all $J \in \mathcal{I}$.

\begin{algorithm}[t]
\caption{Conflict-free closed neighborhood coloring of an  interval graph with at most four colors}
\label{algo-cfcn}
\KwIn{A connected interval graph $G$ along with its distinguishing interval representation $\mathcal{I}$.}
\KwOut{A conflict-free closed neighborhood coloring $C_G$ of $G$ with at most four colors.}
\For{{\bf each} interval $I_{v_i} \in \mathcal{I}$ by increasing left end point}{
\If {$v_i$ is not colored}{ 
\eIf {$I_{v_i}$ is the rightmost interval in $\mathcal{I}$}{
$C_G(v_i)=1$ and $C_G(v_j)=0$ for all $v_j \in N(v_i)$  with $l(I_{v_j}) \geq l(I_{v_{i}})$  \;
}
{find  $v_l \in N(v_i)$ such that $r(I_{v_j}) \leq r(I_{v_l})$ for all $v_j \in N[v_i]$\; 
\eIf {$I_{v_l}$ is the rightmost interval in $\mathcal{I}$}{
 $C_G(v_i)=1$, $C_G(v_l)=2$ and  $C_G(v_j)=0$ for all $v_j \in N(v_i \cup v_l)$ with $l(I_{v_j}) \geq l(I_{v_{i}})$\;
}
{find $v_{l'}\in N(v_l)$ such that $r(I_{v_j}) \leq r(I_{v_{l'}})$ for all $v_j \in N[v_l]$ \;
$C_G(v_i)=1, C_G(v_l)=2$, $C_G(v_{l'})=3$ and $C_G(v_j)=0$ for all $v_j \in N(v_i \cup v_l \cup v_{l'})$ with $r(I_{v_j}) \leq r(I_{v_{l'}})$ and $l(I_{v_j}) \geq l(I_{v_{i}})$\;}

{\bf end if }

}
{\bf end if }

}
{\bf end if }
  
}
{\bf end for }

\end{algorithm}

\setcounter{theorem}{4}

\begin{theorem}\label{th-cfcn-interval}
Let $G$ be an interval graph with at least one edge, then $$2 \leq \chi_{cf}[G] \leq 4$$
\end{theorem}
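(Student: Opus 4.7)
The lower bound is immediate: any edge $uv$ shows that a single color cannot work, since neither $N[u]$ nor $N[v]$ would contain a vertex of unique color. The bulk of the argument is the upper bound, for which the plan is to prove that Algorithm~\ref{algo-cfcn} always terminates with a valid \cfcn{} coloring using only the four colors $\{0,1,2,3\}$.

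For termination and completeness of coloring, I would observe that each iteration processes the leftmost uncolored interval $v_i$ and colors at least $v_i$ itself, so $v_i$ strictly moves rightward along the interval representation. Since $G$ is connected, every vertex eventually becomes the leftmost uncolored interval, or is colored $0$ as a neighbor of some $v_i$, $v_l$, or $v_{l'}$. To verify the conflict-free property, I would do a case split on the color assigned to a vertex $v$. If $v$ receives color $1$ (as some $v_i$), I claim it is the unique color-$1$ vertex in $N[v_i]$; if $v$ is $v_l$ (color $2$) or $v_{l'}$ (color $3$), a similar uniqueness claim must be established; finally, if $v$ has color $0$, then $v$ is adjacent to at least one of $v_i, v_l, v_{l'}$ from its iteration, and I must verify that the corresponding color $1$, $2$, or $3$ remains the unique color in $N[v]$.

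The principal obstacle is the uniqueness claim across iterations: I must rule out the situation where two distinct ``$v_i$'' (or two ``$v_l$'', or two ``$v_{l'}$'') from different iterations both lie in the closed neighborhood of some vertex. The key structural property I would exploit is that $v_l$ is chosen to have the rightmost right endpoint in $N[v_i]$, and $v_{l'}$ the rightmost in $N[v_l]$; hence after an iteration the ``colored frontier'' extends to $r(I_{v_{l'}})$. If a later iteration's $v_i^{\mathrm{new}}$ intersects some $v_j$ that was colored $0$ in the earlier iteration, then $l(I_{v_i^{\mathrm{new}}}) \leq r(I_{v_j}) \leq r(I_{v_{l'}})$ and the leftmost-uncolored choice forces $v_i^{\mathrm{new}}$ to be non-adjacent to all of $v_i, v_l, v_{l'}$, which via the rightmost-endpoint selection of $v_l$ and $v_{l'}$ gives a contradiction. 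I would formalise this by tracking intervals via their left and right endpoints and using the Helly property of intervals on the real line (any two intersecting intervals share a point, and the rightmost endpoint of an interval in $N[v_i]$ dominates).

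The remaining verification for color $0$ vertices is then routine: such $v$ has some $v^\ast \in \{v_i, v_l, v_{l'}\} \cap N[v]$, whose color is unique in $N[v^\ast]$ by the earlier cases, and the interval structure together with the rightmost-endpoint choices ensures no other vertex of that same color lies in $N[v]$.
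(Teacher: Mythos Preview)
Your approach is essentially the paper's: both rely on Algorithm~\ref{algo-cfcn} and argue by case analysis on the assigned color, using the rightmost-endpoint selections of $v_l$ and $v_{l'}$ as the structural engine. The paper organises the verification the same way: first show that two vertices receiving the same nonzero color are never adjacent (so a nonzero-colored vertex witnesses its own color as unique in its closed neighborhood), and then show that a color-$0$ vertex cannot have two neighbors of the same nonzero color.

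One step in your sketch does not go through as written. In the middle paragraph you assert that ``the leftmost-uncolored choice forces $v_i^{\mathrm{new}}$ to be non-adjacent to all of $v_i, v_l, v_{l'}$''. That implication is false: the fact that $v_i^{\mathrm{new}}$ was left uncolored by the earlier iteration means \emph{either} it is non-adjacent to all three \emph{or} $r(I_{v_i^{\mathrm{new}}}) > r(I_{v_{l'}})$, and the second alternative is not ruled out by the rightmost-endpoint choice of $v_{l'}$ (which only maximises over $N[v_l]$, not over all uncolored intervals). The paper handles this more carefully: assuming a color-$0$ vertex $v$ has two color-$1$ neighbors $u_1,u_2$ with $r(I_{u_1}) < l(I_{u_2})$, it uses that $v_l \in N(u_1)$ has maximal right endpoint in $N[u_1]$, hence $r(I_v) \leq r(I_{v_l})$, hence $I_{v_l} \cap I_{u_2} \neq \emptyset$, and then the choice of $v_{l'}$ forces $u_2$ to be colored $0$ or $3$ rather than $1$. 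Your ``colored frontier'' intuition is the right picture, but the contradiction needs this chain through $v$ and $v_l$ rather than a direct non-adjacency claim. Relatedly, labelling the color-$0$ case ``routine'' understates it: that is exactly where the paper expends its effort.
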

\begin{proof}
It is easy to see that for any graph with at least one edge, $2 \leq \chi_{cf}[G]$.
The algorithm that computes the \cfcn{} coloring of $G$ with at most four colors is given in Algorithm~\ref{algo-cfcn}.
Now we present details on the
proof of correctness of the algorithm.

Let $C_G$ be a \cfcn{} coloring of $G$  obtained using Algorithm~\ref{algo-cfcn}. 
For any $u_i, u_j \in V(G)$, if $C_G(u_i)=C_G(u_j)=1$ then $u_iu_j \notin E(G)$:
Suppose assume that $u_iu_j \in E(G)$ and $l(I_{u_i}) \leq l(I_{u_j})$. 
If $I_{u_j}$ is the rightmost interval in $\mathcal{I}$ then $C_G(u_j)=2$ [lines 7-8 in Algorithm~\ref{algo-cfcn}],
which is a contradiction to $C_G(u_2)=1$.
If $I_{u_j}$ is not the rightmost interval in $\mathcal{I}$,  
then there exists $v_l$ adjacent to $u_i$ such that 
$r(I_{v_k}) \leq r(I_{v_l})$ for all $v_k \in N(u_i)$. So we have $r(I_{u_j}) \leq r(I_{v_l})$, and the algorithm colors $u_j$ with $0$, which is a contradiction as $C_G(u_j)=1$. 
If $I_{u_j}$ lies completely inside $I_{u_i}$ then $C_G(u_j)=0$, which is again a contradiction to $C_G(u_j)=1$.
Therefore $u_iu_j \notin E(G)$.
Similarly, we can show that any two vertices of color $2$ or color $3$ are also not adjacent. This shows that any vertex $v$ colored with a non-zero color by Algorithm~\ref{algo-cfcn} have a unique color in the set $N[v]$.

Now we show that if a vertex $v$ is assigned a color $0$ by Algorithm~\ref{algo-cfcn} then $v$ has at least one 
non-zero unique color in $N[v]$. It is easy to see that if a vertex is colored $0$ by Algorithm~\ref{algo-cfcn} then it is adjacent to at least one vertex of non-zero color. 
Since $C_G(v)=0$, $v$ is not the rightmost interval in $\mathcal{I}$. 
Suppose assume that $v$ is adjacent to two vertices $u_1$ and $u_2$ of color $1$ and $l(I_{u_1}) \leq l(I_{u_2})$. 
Note that $u_1$ and $u_2$ are not adjacent as $C_G(u_1)=C_G(u_2)= 1$, so we have  $r(I_{u_1}) \leq l(I_{u_2})$.

Since $C_G(v)=0$ there exists a $v_l$ in $N(u_1)$ such that 
 $r(I_{v_j}) \leq r(I_{v_l})$ for all $v_j \in N[u_1]$. This implies $r(I_{v}) \leq r(I_{v_l})$ and $I_{v_l} \cap I_{u_2} \neq \emptyset$. 
If $u_2$ is the rightmost interval, then $C_G(v_l)=2$ and $C_G(u_2)=3$ which is a contradiction to $C_G(u_2)=1$. 
If $u_2$ is not the rightmost interval, then
there exists a $v_{l'}$ in $N(v_l)$ such that 
$r(I_{v_j}) \leq r(I_{v_{l'}})$ for all $v_j \in N[v_l]$. This implies $r(I_{u_2}) \leq r(I_{v_{l'}})$.
The algorithm colors vertex $v_l$ with $2$, $v_{l'}$ with $3$ and $u_2$ with $0$, which is again a contradiction to $C_G(u_2)=1$. 
 
Along similar lines, we can also show that any vertex colored zero can not be adjacent to two vertices of same non-zero color.\qed

\end{proof}

\begin{theorem}
Let $G$ be an interval graph with at least two edges, then $$2 \leq \chi_{cf}(G) \leq 4$$
\end{theorem}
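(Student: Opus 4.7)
The plan is to adapt the argument of Theorem~\ref{th-cfcn-interval}, replacing the closed-neighborhood reasoning with its open-neighborhood counterpart.

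For the lower bound I would use that, since $G$ is connected and has at least two edges, some vertex $v$ has $|N(v)|\geq 2$; a monochromatic set of size two contains no uniquely colored element, so $\chi_{cf}(G)\geq 2$.

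For the upper bound I would rerun the greedy sweep of Algorithm~\ref{algo-cfcn} along a distinguishing interval representation of $G$, producing colors in $\{0,1,2,3\}$. The first half of the correctness argument is essentially free: every vertex $v$ assigned color $0$ already has, by the proof of Theorem~\ref{th-cfcn-interval}, a uniquely colored witness in $N[v]$, and because $v$ itself is colored $0$ that witness necessarily lies in the \emph{open} neighborhood $N(v)$. Thus all color-$0$ vertices automatically satisfy the CFON condition.

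The main obstacle is the CFON condition at the ``seed-type'' vertices $v_i$, $v_l$ and $v_{l'}$ carrying colors $1$, $2$ and $3$, which in the original CFCN argument served as their own CFCN witnesses. For each of them the algorithm supplies a natural candidate in the open neighborhood: $v_l\in N(v_i)$ witnesses for $v_i$, the pair $v_i,v_{l'}\in N(v_l)$ witness for $v_l$, and $v_l\in N(v_{l'})$ witnesses for $v_{l'}$. The heart of the proof is to check that each of these candidates carries a color that appears \emph{exactly} once in the relevant open neighborhood. I would do this by an endpoint-ordering case analysis parallel to the one in Theorem~\ref{th-cfcn-interval}: any hypothetical second color-$c$ vertex $v_m$ created in another iteration of the algorithm must have $l(I_{v_m})\geq l(I_{v_{i'}})$ for its own seed $v_{i'}$, and using that $v_l$ was chosen as the rightmost interval in $N(v_i)$, together with the already established non-adjacency of equi-colored vertices, one rules out that $I_{v_m}$ intersects $I_{v_i}$ (and symmetrically for $v_l$ and $v_{l'}$). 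Completing this case analysis for all three non-zero colors produces a valid \cfon{} coloring with palette $\{0,1,2,3\}$, yielding $\chi_{cf}(G)\leq 4$.
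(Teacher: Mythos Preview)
Your plan to reuse Algorithm~\ref{algo-cfcn} verbatim for the open-neighborhood case has a genuine gap in the branch where the current seed $v_i$ is itself the rightmost interval of $\mathcal{I}$ (lines~3--4). In that branch the algorithm assigns $C_G(v_i)=1$ and colors every neighbor $v_j$ with $l(I_{v_j})\geq l(I_{v_i})$ by $0$; no vertex $v_l$ is ever selected, so your proposed witness ``$v_l\in N(v_i)$'' simply does not exist. Concretely, take the star $K_{1,3}$ realized by $I_{v_1}=[0,10]$, $I_{v_2}=[1,2]$, $I_{v_3}=[3,4]$, $I_{v_4}=[5,6]$. The sweep starts at $v_1$, finds it rightmost, sets $C_G(v_1)=1$ and $C_G(v_2)=C_G(v_3)=C_G(v_4)=0$; then $N(v_1)=\{v_2,v_3,v_4\}$ is monochromatic of size three and has no uniquely colored vertex, so the output is \emph{not} a valid \cfon{} coloring.

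This is precisely where the paper departs from your proposal. Rather than reusing Algorithm~\ref{algo-cfcn}, the paper gives a separate Algorithm~\ref{algo-cfon} which agrees with Algorithm~\ref{algo-cfcn} everywhere except in the rightmost-interval branch: there it additionally picks one neighbor $v_{i'}$ of $v_i$ with $l(I_{v_{i'}})\geq l(I_{v_i})$ (when one exists) and colors it~$2$, thereby supplying the missing open-neighborhood witness for $v_i$. Once this modification is made, the endpoint-ordering case analysis you outline for the seeds $v_i,v_l,v_{l'}$ indeed goes through along the lines of Theorem~\ref{th-cfcn-interval}.
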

\begin{proof}
The algorithm that finds a \cfon{} coloring of interval graphs with four colors is given in Algorithm~\ref{algo-cfon}.
The correctness proof of Algorithm~\ref{algo-cfon} is similar to proof of Theorem~\ref{th-cfcn-interval}.   \qed
\end{proof}

\begin{algorithm}[t]
\caption{Conflict-free open neighborhood coloring of an interval graph with at most four colors}
 \label{algo-cfon}
\KwIn{A connected interval graph $G$ along with its distinguishing interval representation $\mathcal{I}$.}
\KwOut{A conflict-free open neighborhood coloring $C_G$ of $G$ with four colors.}
\For{{\bf each} interval $I_{v_i} \in \mathcal{I}$ by increasing left end point}{
\If {$v_i$ is not colored}{ 
\eIf {$I_{v_i}$ is the rightmost interval in $\mathcal{I}$}{
\eIf{there is no vertex $v_{i'}$ such that $I_{v_{i'}} \subseteq I_{v_i}$ }{
$C_G(v_i)=1$  \;}
{select an arbitrary vertex $v_{i'}$ in $N(v_i)$ with $l(I_{v_{i'}}) \geq l(I_{v_{i}})$ \; 
$C_G(v_i)=1$,$C_G(v_{i'})=2$ and $C_G(v_j)=0$ for all $v_j \in N(v_i)$ with $l(I_{v_j}) \geq l(I_{v_{i}})$}
{\bf end if }
}
{find a $v_l \in N(v_i)$ such that $r(I_{v_j}) \leq r(I_{v_l})$ for all $v_j \in N[v_i]$\; 
\eIf {$I_{v_l}$ is the rightmost interval in $\mathcal{I}$}{
 $C_G(v_i)=1$, $C_G(v_l)=2$ and  $C_G(v_j)=0$ for all $v_j \in N(v_l \cup v_i)$ with $l(I_{v_j}) \geq l(I_{v_{i}})$  \;
}
{find a $v_{l'}\in N(v_l)$ such that $r(I_{v_j}) \leq r(I_{v_{l'}})$ for all $v_j \in N[v_l]$ \;
$C_G(v_i)=1$,$C_G(v_l)=2$, $C_G(v_{l'})=3$ and $C_G(v_j)=0$ for all $v_j \in N(v_i \cup v_l \cup v_{l'})$ with $r(I_{v_j}) \leq r(I_{v_{l'}})$ and $l(I_{v_j}) \geq l(I_{v_{i}})$ \; }

{\bf end if }

}
{\bf end if }

}
{\bf end if }
  
}
{\bf end for }

\end{algorithm}


\subsection{Cographs}
\begin{theorem}\label{th-cographs}
The \cfcn{} (resp. \cfon{}) coloring problem can be solved in polynomial time on cographs.
\end{theorem}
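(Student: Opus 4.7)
The plan is to perform bottom-up dynamic programming on the cotree of $G$. Since cographs are precisely the $P_4$-free graphs, the cotree $T$ can be computed in linear time; each internal node of $T$ is labeled either $\cup$ (disjoint union) or $\vee$ (join), and the leaves correspond to the vertices of $G$. For every cotree node $t$, the induced subgraph $G_t$ on the leaves of the subtree rooted at $t$ is a \emph{module} of $G$: every vertex outside $V(G_t)$ is adjacent to either all or none of $V(G_t)$. Consequently, there is a common external neighborhood $N^{out}(t) \subseteq V(G) \setminus V(G_t)$ such that $N_G[v] = N_{G_t}[v] \cup N^{out}(t)$ for every $v \in V(G_t)$, so whether $v$ ends up CF-satisfied depends only on the color multisets of $N_{G_t}[v]$ and of $N^{out}(t)$.

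Iterating over candidate palette sizes $k = 1, 2, \ldots$ up to an elementary upper bound, I would compute, bottom-up at each cotree node $t$, the set of realizable \emph{signatures} of partial $k$-colorings of $V(G_t)$. A signature records, with all color multiplicities capped at $2$ (since only the cases ``appears $0$ / exactly $1$ / at least $2$ times'' matter for CF-constraints): (i) the color-class profile of $V(G_t)$ as a whole, and (ii) for each closed-twin class $T$ inside $G_t$, its size, its color-class profile, and the multiset of colors in the common closed neighborhood $N_{G_t}[v]$ for $v \in T$. Given a signature of $G_t$ together with any eventual color profile of $N^{out}(t)$, one can locally determine CF-satisfaction of every vertex of $V(G_t)$ in the final coloring.

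The transitions are natural. At a $\cup$-node, closed neighborhoods within the combined subgraph do not change, so the signatures of the children merge by juxtaposing twin classes and summing color-class profiles. At a $\vee$-node $t = t_1 \vee t_2$, every twin class of $V(G_{t_1})$ has its closed-neighborhood multiset incremented by the total color-class profile of $G_{t_2}$ (and symmetrically for $G_{t_2}$), after which twin classes that become identical are merged. At the root we accept iff some signature certifies CF-satisfaction for every vertex of $V(G)$; the minimum such $k$ equals $\chi_{cf}[G]$. The \cfon{} case is handled by the same scheme with open neighborhoods replacing closed ones throughout.

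The main obstacle is bounding the signature space polynomially. This relies on the cograph's modular structure: twin classes at each cotree node are inherited from those of the children in a controlled way, and the per-class data takes only $3^{O(k)}$ distinct values due to capping. Combined with a polynomial upper bound on the number of colors needed for cographs (derivable from elementary bounds such as Lemma~\ref{lem-bound}), the total signature space is polynomial, so the DP runs in polynomial time in $n$, establishing the theorem for both \cfcn{} and \cfon{} coloring.
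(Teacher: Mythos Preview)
Your dynamic programming scheme is far heavier than what the paper does, and as written it does not close. The paper's argument is essentially one observation: a connected cograph is a join $G_x \vee G_y$ (the root of the cotree is a series node), so picking any $u\in V(G_x)$, $v\in V(G_y)$, coloring $u$ with $0$, $v$ with $1$, and everything else with $2$ already gives a valid \cfcn{} (and \cfon{}) coloring, because every vertex is adjacent to both $u$ and $v$. Hence $\chi_{cf}[G]\le 3$ and $\chi_{cf}(G)\le 3$ for connected cographs, and the minimum can then be read off from a constant number of easy checks. No DP is needed.

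Your proposal, by contrast, has two concrete gaps. First, invoking Lemma~\ref{lem-bound} to get a ``polynomial upper bound on the number of colors needed for cographs'' is wrong: that lemma bounds $\chi_{cf}$ by the cluster vertex deletion number $d$, and cographs can have $d=\Theta(n)$ (take $K_{n,n}$), so you get no usable bound on $k$ this way. Second, your signature stores, \emph{for each closed-twin class} of $G_t$, a size together with two capped color profiles. Even if $k$ were constant so each twin class carries only $3^{O(k)}$ possible labels, a cotree node can have $\Theta(n)$ distinct twin classes (at a join node the twin classes of the two children stay distinct and simply accumulate), so the number of signature tuples is exponential in $n$. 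The sentence ``twin classes at each cotree node are inherited from those of the children in a controlled way'' is exactly the step that needs a proof and does not hold in the form you need. Without a polynomial bound on the realizable signature space, the DP is not a polynomial-time algorithm. The missing idea is precisely the paper's: on cographs the answer is at most $3$, which makes the whole machinery unnecessary.
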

\begin{proof}
We use modular decomposition \cite{habib2010survey} technique to solve conflict free coloring problem on cographs. 
First, we define the notion of \emph{modular decomposition}.
A set $M \subseteq V(G)$ is called {\it module} of $G$ if all vertices of $M$ have the same set 
of neighbors in $V(G)\setminus M$.  
The \emph{trivial modules} are $V(G)$, and $\{v\}$ for all $v$.
A prime  graph is a graph in which all
modules are trivial.
The modular decomposition of a graph is one of the decomposition techniques which was introduced by Gallai~\cite{gallai1967transitiv}. The {\it modular decomposition} of a graph $G$ is a rooted tree $M_G$ 
that has the following properties:
\begin{enumerate}
\setlength{\itemsep}{1pt}
\setlength{\parskip}{0pt}
 \item The leaves of $M_G$ are the vertices of $G$.
 \item For an internal node $h$ of $M_G$, let $M(h)$ be the set of vertices of $G$ that are leaves of the subtree of 
 $M_G$ rooted at $h$. ($M(h)$ forms a module in $G$).
 \item For each internal node $h$ of $M_G$ there is a graph $G_h$ (\emph{representative graph}) with 
 $V(G_h)=\{h_1,h_2,\cdots,h_r\}$, where $h_1,h_2,\cdots,h_r$ are the 
 children of $h$ in $M_G$ and for  $1 \leq i<j \leq r$, $h_i$ and $h_j$ are adjacent in $G_h$ iff there are vertices $u \in M(h_i)$ and $v \in M(h_j)$ that are adjacent in $G$.
 \item $G_h$ is either a clique, an independent set, or a prime graph and 
 $h$ is labeled \emph{Series} if $G_h$ is clique, \emph{Parallel} if $G_h$ is an independent set, and \emph{Prime} otherwise.
\end{enumerate}

Modular decomposition tree of cographs has only parallel and series nodes. 
Let $G$ be a cograph whose modular decomposition tree is $M_G$. Without loss of generality we assume that 
the root $r$ of tree $M_G$ is a series node, otherwise, $G$ is not connected and we can color each 
connected component independently. 
Let the children of $r$ be $x$ and $y$. Further, let the cographs corresponding to the subtrees at 
$x$ and $y$ be $G_x$ and $G_y$. 
First we consider \cfcn{} coloring problem on cographs. If $G$ has a universal vertex then color it with one color and the rest with a second color. If $G$ does not have a universal vertex, then color one vertex of $G_x$ with $0$, one vertex of $G_y$ with $1$ and all other remaining vertices of $G$ with $2$.

For the \cfon{} coloring, if $G$ contains only two vertices then color one vertex with color $0$ and other one with color $1$. If $G$ contains at least three vertices then color one vertex of $G_x$ with $0$, one vertex of $G_y$ with $1$ and all other remaining vertices of $G$ with $2$. \qed
\end{proof}

\section{Conclusion}
In this paper, we have studied the parameterized complexity of conflict-free coloring problem with respect to open/closed neighborhoods.
We have shown that both closed and open neighborhood conflict-free colorings are $\fpt$ parameterized by cluster vertex deletion number and also showed that both variants of the problem admit an additive constant approximation algorithm when parameterized by the distance to threshold graphs. We studied the complexity of the problem on special classes of graphs. We show that both closed and open neighborhood conflict-free colorings are polynomially solvable on cographs. On split graphs, closed neighborhood coloring can be solved in polynomial time, whereas open neighborhood coloring is $\NP$-complete. For interval graphs, we give upper bounds on the number of colors needed for both open/closed conflict-free coloring problems. 

The following problems remain open. 
\begin{itemize}
 \item Does conflict-free open/closed coloring admit a polynomial kernel when parameterized by (a) the size of a vertex cover (b) distance to clique?
 \item Is the \cfcn{} problem is \fpt{} when parameterized by distance to cographs, distance to split graphs?
 \item What is the parameterized complexity of both the problems when parameterized by clique-width/rank-width?
 \item  What is the complexity of \cfcn{} and \cfon{} coloring problems on interval graphs?
\end{itemize}


\bibliographystyle{splncs03}
\bibliography{cfc.bib}

\end{document}